\providecommand{\varitem}{} %
\newcommand{\algorithmstyle}[1]{\renewcommand{\algocf@style}{#1}}
\let\oldnl\nl%
\newcommand{\nonl}{\renewcommand{\nl}{\let\nl\oldnl}}%
\newcommand{\Procedure}[2]{\BlankLine\nonl\customProcedure{#1}{#2}}
\def\BBPMCS{BBP-MCIS\xspace}
\def\TCMCS{$2$-MCIS\xspace}
\def\MWM{MWM\xspace}
\def\BC{\text{BC}\xspace}
\def\MCC{CC}
\def\S{\text{$\mathcal S$}\xspace}
\newcommand{\Part}[1]{\text{$\mathcal P_{#1}$}\xspace}
\newcommand{\Bl}{\text{Bl}\xspace}
\newcommand{\Br}{\text{Br}\xspace}
\newcommand{\wIso}[0]{\ensuremath{\omega}\xspace}
\newcommand{\WIso}[0]{\ensuremath{\mathcal{W}}\xspace}
\DeclareMathOperator{\B}{B}
\DeclareMathOperator{\C}{C}
\DeclareMathOperator{\type}{type}
\DeclareMathOperator{\dom}{dom}
\begin{document}

\mainmatter  %

\title{Finding Largest Common Substructures of Molecules in Quadratic Time%
\thanks{This work was supported by the German Research Foundation (DFG), priority programme ``Algorithms for Big Data'' (SPP 1736).}}

\author{Andre Droschinsky \and Nils Kriege \and Petra Mutzel}

\authorrunning{A. Droschinsky \and N. Kriege \and P. Mutzel}

\institute{
Dept.\ of Computer Science,
Technische Universit{\"a}t Dortmund, Germany\\
\email{\{andre.droschinsky,nils.kriege,petra.mutzel\}@tu-dortmund.de}%
}

\maketitle

\begin{abstract}
Finding the common structural features of two molecules is a fundamental task
in cheminformatics. Most drugs are small molecules, which can naturally be 
interpreted as graphs. 
Hence, the task is formalized as maximum common subgraph problem. 
Albeit the vast majority of molecules yields outerplanar graphs this problem 
remains ${\sf NP}$-hard.

We consider a variation of the problem of high practical relevance, where
the rings of molecules must not be broken, i.e., the block and bridge structure 
of the input graphs must be retained by the common subgraph.
We present an algorithm for finding a maximum common connected induced subgraph 
of two given outerplanar graphs subject to this constraint. 
Our approach runs in time $\mathcal{O}(\Delta n^2)$ in outerplanar graphs on $n$
vertices with maximum degree $\Delta$. 
This leads to a quadratic time complexity in molecular graphs, which have bounded degree.
The experimental comparison on synthetic and real-world datasets 
shows that our approach is highly efficient in practice and outperforms 
comparable state-of-the-art algorithms.
\end{abstract}

\section{Introduction}
The maximum common subgraph problem arises in many application domains, where
it is necessary to elucidate common structural features of objects represented 
as graphs. In cheminformatics this problem has been extensively 
studied~\cite{Raymond2002,Ehrlich2011,Schietgat2013} and is often referred to as 
maximum or \emph{largest common substructure problem}.
Two variants of the problem can be distinguished: The maximum common induced 
subgraph problem (MCIS) is to find isomorphic induced subgraphs of two given 
graphs with the largest possible number of vertices. 
The maximum common edge subgraph problem (MCES) does not require that common 
subgraphs are induced and aims at maximizing the number of edges.
Both variants can be reduced to a maximum clique problem in the product graph
of the two input graphs~\cite{Raymond2002}.
In cheminformatics MCES is used more frequently since it (i) reflects
the notion of chemical similarity more adequately~\cite{Raymond2002}, and (ii)
can reduce the running time of product graph based algorithms~\cite{Nicholson1987}.
Although such algorithms still have exponential running time in the worst case,
they are commonly applied to molecular graphs in practice~\cite{Raymond2002}.

However, there are several restricted graph classes which render polynomial time 
algorithms possible~\cite{Akutsu1993,Akutsu2013,Yamaguchi2004}.
The seminal work in this direction is attributed to J.~Edmonds~\cite{Matula1978}, 
who proposed a polynomial time algorithm for the maximum common subtree problem.
Here, the given graphs and the desired common subgraph must be trees. 
Recently, it was shown that this problem can be solved in time 
$\mathcal{O}(\Delta n^2)$ for (unrooted) trees on $n$ vertices with maximum 
degree $\Delta$~\cite{Droschinsky2016}.
The (induced) subgraph isomorphism problem (SI) is to decide if a pattern graph
is isomorphic to an (induced) subgraph of another graph and is generalized by 
MCIS and MCES, respectively.
Both variants of SI are ${\sf NP}$-complete, even when the pattern is a 
forest and the other graph a tree~\cite{Garey1979}; just as when the pattern is 
a tree and the other is outerplanar~\cite{Syslo1982}. 
On the other hand, when both graphs are biconnected and outerplanar, 
induced SI can be solved in time $\mathcal{O}(n^2)$~\cite{Syslo1982} and SI in 
$\mathcal{O}(n^3)$~\cite{Lingas1989}.
These complexity results and the demand in cheminformatics lead to the 
consideration of MCES under the so-called \emph{block and bridge preserving} 
(BBP) constraint~\cite{Schietgat2013}, which requires the common subgraph to retain 
the local connectivity of the input graphs. BBP-MCES is not only computable
in polynomial-time, but also yields meaningful results for cheminformatics. 
A polynomial-time algorithm was recently proposed for BBP-MCIS, which requires 
time $\mathcal{O}(n^6)$ in series-parallel and $\mathcal{O}(n^5)$ in outerplanar 
graphs~\cite{Kriege2014a}.

Most of the above mentioned polynomial time algorithms are either not applicable 
to molecular graphs or impractical due to high constants. A positive exception is 
the BBP-MCES approach of~\cite{Schietgat2013}, which has been shown to outperform 
state-of-the-art algorithms on molecular graphs in practice.
This algorithm is stated to have a running time of $\mathcal{O}(n^{2.5})$, but 
in fact leads to a running time of $\Omega(n^4)$ in the worst case~\cite{Droschinsky2016}.

\subsubsection{Our contribution.}
We take up the concept of BBP and propose a novel BBP-MCIS algorithm with running 
time $\mathcal{O}(\Delta n^2)$ in outerplanar graphs with $n$ vertices and maximum 
degree $\Delta$. 
We obtain this result by combining ideas of \cite{Droschinsky2016} for the 
maximum common subtree problem with a new algorithm for biconnected MCIS in 
biconnected outerplanar graphs.
For this subproblem we develop a quadratic time algorithm, which exploits the 
fact that the outerplanar embedding of a biconnected outerplanar graph is unique.
Moreover, the algorithm allows to list all solutions in quadratic total time.
Our approach supports to solve BBP-MCIS w.r.t.\@ a weight function on the 
mapped vertices and edges.
The experiments show that BBP-MCIS in almost all cases yields the same 
results as BBP-MCES for molecular graphs under an adequate weight function.
Our method outperforms in terms of efficiency the BBP-MCES approach of~\cite{Schietgat2013} 
by orders magnitude.

\section{Preliminaries}
\label{sec:Preliminaries}
We consider simple undirected graphs.
Let $G=(V,E)$ be a graph, we refer to the set of \emph{vertices} $V$ by $V(G)$ 
or $V_G$ and to the set of \emph{edges} by $E(G)$ or $E_G$. An edge connecting 
two vertices $u, v \in V$ is denoted by $uv$ or $vu$. The \emph{order} $|G|$ of 
a graph $G$ is its number of vertices.
Let $V'\subseteq V$, then the graph $G[V']=(V',E')$ with 
$E'=\{uv\in E \mid u,v\in V'\}$ is called \emph{induced} subgraph. For $U \subseteq V$ 
we write $G\setminus U$ for $G[V\setminus U]$. A graph is \emph{connected} 
if there is a path between any two vertices. A \emph{connected component} of a 
graph $G$ is a maximal connected subgraph of $G$. A graph $G=(V,E)$ with $|V| \geq 3$
is called \emph{biconnected} if $G\setminus\{v\}$ is connected for any $v \in V$.
A maximal biconnected subgraph of a graph $G$ is called \emph{block}. If an edge 
$uv$ is not contained in any block, the subgraph $(\{u,v\},\{uv\})$ is called a 
\emph{bridge}. A vertex $v$ of $G$ is called \emph{cutvertex}, if $G\setminus \{v\}$
consists of more connected components than $G$. 
A graph is \emph{planar} if it admits a drawing on the plane such that no two 
edges cross. The connected regions of the drawing enclosed by the edges are 
called \emph{faces}, the unbounded region is referred to as \emph{outer face}.
An edge and a face are said to be \emph{incident} if the edge touches the face.
Two faces are \emph{adjacent} if they are incident with a common edge.
A graph is called \emph{outerplanar} if it admits a drawing on the plane without 
crossings, in which every vertex lies on the boundary of the outer face.
A \emph{matching} in a graph $G=(V,E)$ is a set of edges $M\subseteq E$, such 
that no two edges share a vertex. A matching $M$ is \emph{maximal} if there is 
no other matching $M'\supsetneq M$ and \emph{perfect}, if $2|M|=|V|$.
A \emph{weighted graph} is a graph endowed with a function $w:E\to\mathbb R$. 
A matching $M$ in a weighted graph has weight by $W(M):=\sum_{e\in M} w(e)$;  
it is a \emph{maximum weight matching} (\MWM) if there is no matching $M'$ of $G$ 
with $W(M')>W(M)$.

An \emph{isomorphism} between two graphs $G$ and $H$ is a bijection 
$\phi : V(G) \to V(H)$ such that $uv\in E(G)\Leftrightarrow\phi(u)\phi(v)\in E(H)$.
A \emph{common (induced) subgraph isomorphism} is an isomorphism between (induced)
subgraphs $G' \subseteq G$ and $H' \subseteq H$.
A subgraph $G' \subseteq G$ is \emph{block and bridge preserving} (BBP) if
(i) each bridge in $G'$ is a bridge in $G$,
(ii) any two edges in different blocks in $G'$ are in different blocks in $G$.
A common subgraph isomorphism $\phi$ is BBP if both subgraphs are BBP, it is
\emph{maximal} if it cannot be extended.
Molecular graphs are typically annotated with atom and bond types, which should
be preserved under isomorphisms. More general, we allow for a weight function 
$\wIso:(V_G\times V_H)\cup (E_G\times E_H)\to \mathbb{R}^{\geq 0}\cup \{-\infty\}$.
The weight $\WIso(\phi)$ of an isomorphism $\phi$ between $G$ and $H$ under 
$\wIso$ is the sum of the weights $\wIso(v,\phi(v))$ and $\wIso(uv,\phi(v)\phi(v))$
for all vertices $v$ and edges $uv$ mapped by $\phi$.
A common subgraph isomorphism $\phi$ is \emph{maximum} if its weight $\WIso(\phi)$ 
is maximum.
A maximum isomorphism does not map any vertices or edges contributing weight 
$-\infty$ and we call these pairs \emph{forbidden}.
We further define $[1..k]:=\{1,\ldots,k\}$ for $k\in \mathbb N$.

\section{Biconnected MCIS in Outerplanar Graphs}
\label{sec:2MCS}
In this section we present an algorithm to determine the weight of a maximum 
common biconnected induced subgraph isomorphism (\TCMCS) of two biconnected 
outerplanar graphs.
First we show how to compute the maximal common biconnected subgraph isomorphisms.
Since these may contain forbidden vertex and edge pairs, we then describe how to 
obtain the weight of a maximum solution from them. Finally we show how to output 
one or all maximum solutions.

Outerplanar graphs are well-studied and have several characteristic properties, 
see \cite{Syslo1982} for further information. In particular, our algorithm
exploits the fact that biconnected outerplanar graphs have a unique outerplanar
embedding in the plane (up to the mirror image). In these embeddings, every edge 
is incident to exactly two faces that are uniquely defined.
We observe that the mapping is determined by starting parameters, i.e., 
an edge of both input graphs together with the mapping of their endpoints and 
incident faces.

We say a face is mapped by an isomorphism $\phi$ if all the vertices bordering 
the face are mapped by $\phi$.
We distinguish four cases to describe the mapping of an edge $uv \in E(G)$ to an
edge $u'v' \in E(H)$ by an isomorphism $\phi$ between biconnected induced 
subgraphs.
Assume the edge $uv$ is incident to the faces $A$ and $B$ in $G$ and $u'v'$ is 
incident to $A'$ and $B'$ in $H$, see Fig.~\ref{fig:faces}.
At least one face incident to $uv$ must be mapped by $\phi$, since the common 
subgraph must be biconnected. For the sake of simplicity of the case distinction, 
we also associate the two other faces, regardless of whether they are mapped or not.
The isomorphism may map the endpoints of the edges in two different ways---just 
as the two incident faces. 
We can distinguish the following four cases:\\
(1) $u \mapsto u',\ v \mapsto v',\ A \mapsto A',\ B \mapsto  B'$, \hfill
(2) $u \mapsto v',\ v \mapsto u',\ A \mapsto A',\ B \mapsto  B'$,\\
(3) $u \mapsto u',\ v \mapsto v',\ A \mapsto B',\ B \mapsto  A'$, \hfill 
(4) $u \mapsto v',\ v \mapsto u',\ A \mapsto B',\ B \mapsto  A'$.\\
Given an isomorphism $\phi$ between biconnected common induced subgraphs that 
maps the two endpoints of an edge $e$, let the function $\type(e,\phi) \in [1..4]$
determine the type of the mapping as above.
The following result is the key to obtain our efficient algorithm.

\begin{lemma}\label{lem:outer:determinism}
 Let $\phi$ and $\phi'$ be maximal isomorphisms between biconnected common 
 induced subgraphs of the biconnected outerplanar graphs $G$ and $H$. 
 Assume $e \in E(G)$ is mapped to the same edge $e' \in E(H)$ by $\phi$ and 
 $\phi'$, then
 \begin{equation*}
   \type(e,\phi) = \type(e,\phi') \Longleftrightarrow \phi' = \phi.
 \end{equation*}
\end{lemma}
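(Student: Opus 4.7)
The direction $\phi = \phi' \Rightarrow \type(e,\phi) = \type(e,\phi')$ is immediate. For the converse, my plan is to exploit the unique outerplanar embedding to propagate the common mapping of $e$ throughout the entire common subgraph, face by face. Fixing WLOG the shared type to be type~$1$ (the other three cases are symmetric), we have $\phi(u) = \phi'(u) = u'$, $\phi(v) = \phi'(v) = v'$, and the designated face correspondence $A \mapsto A'$, $B \mapsto B'$.

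The first ingredient is a face-propagation observation: whenever a face $F$ of $G$, with boundary cycle $v_0 v_1 \cdots v_{\ell-1}$, lies entirely in $\dom(\phi)$ and $\phi$ sends one directed boundary edge of $F$ to a known directed boundary edge of a face $F'$ in $H$, the restriction $\phi|_{V(F)}$ is forced. Indeed $F'$ must be a cycle of the same length (faces correspond under isomorphisms of biconnected outerplanar graphs thanks to the uniqueness of the embedding), so walking from the mapped edge around $F$ determines each successive image as the unique next vertex of $F'$ in the matching direction. The identical statement applies to $\phi'$, so whenever both maps cover the same face starting from the same oriented edge, they agree on that face.

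The main step is a BFS over the face-adjacency graph of $G$ starting from $\{A,B\}$. Since the common subgraph is biconnected and contains $e$, each of $\phi$ and $\phi'$ fully maps at least one of $A, B$. I would first argue they agree on which face they map: if $\phi$ maps $A \mapsto A'$ but $\phi'$ does not, then extending $\phi'$ by $V(A) \setminus \dom(\phi')$ using the images forced by the cycle walk around $A'$ yields a valid extension (no vertex conflict, because by the propagation observation any partial coverage of $V(A)$ by $\phi'$ already coincides with $\phi$'s; biconnectedness preserved, because attaching the cycle $A$ to the current domain along the shared edge $e$ yields a biconnected subgraph; and no forbidden pairs because $\phi$ itself uses only finite-weight pairs on $V(A)$). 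This contradicts the maximality of $\phi'$. Hence both map $A \mapsto A'$ and, by the propagation observation, agree on $V(A)$. Iterating the same argument to every face adjacent to $A$ via a shared edge, then outward, propagates the agreement across the entire face-adjacency structure of the common subgraph, yielding $\dom(\phi) = \dom(\phi')$ and $\phi = \phi'$.

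The main obstacle I anticipate is the extension step: one must verify rigorously that gluing a new face $F$ to $\dom(\phi')$ along a shared mapped edge preserves both biconnectedness and the induced-subgraph condition, and that no previously mapped vertex of $F$ is mapped inconsistently. The uniqueness of the outerplanar embedding is indispensable here, since it pins down the face correspondence $F \mapsto F'$ canonically; without it the extension could be ambiguous and the maximality argument would collapse.
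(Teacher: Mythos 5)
Your proposal is correct and follows essentially the same route as the paper's proof: both exploit the unique embedding to force the mapping around the cyclic border of a face incident to $e$, propagate this face by face through the face-adjacency structure, and invoke maximality to conclude $\dom(\phi)=\dom(\phi')$. Your treatment is somewhat more explicit about the extension/maximality step (showing that if $\phi$ covers a face that $\phi'$ does not, then $\phi'$ could be extended), but this is the same argument the paper compresses into its final sentence.
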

\begin{proof}
 It is obvious that the direction $\Longleftarrow$ is correct. 
 We prove the implication $\Longrightarrow$. 
 Since the common subgraph is required to be biconnected, 
 the isomorphisms $\phi$ and $\phi'$ both must map at least one face of 
 $G$ incident to the edge $e$ to a face of $H$ incident to $e'$. 
 The two faces as well as the mapping of endpoints of the two edges are uniquely 
 determined by the type of the mapping.
 We consider the mapping of the vertices on the cyclic border of these faces. 
 Since the mapping of the endpoints of $e$ are fixed, the mapping of 
 all vertices on the border of the face is unambiguously determined.
 Since the common subgraph is required to be biconnected, every extension of 
 the mapping must include all the vertices of a neighboring face.
 For this face, again, the mapping of the endpoints of the shared edge implicates
 the mapping of all vertices on the cyclic border and the extension is unambiguous.
 Therefore, the mapping can be successively extended to an unmapped face.
 Consequently $\phi(u) = \phi'(u)$ holds for all $u \in \dom(\phi) \cap \dom(\phi')$.
 Since $\phi$ and $\phi'$ are  maximal it is not possible that one of them can be
 extended and, hence, we must have $\dom(\phi) = \dom(\phi')$ and the result follows.
 \qed
\end{proof}

\SetKwFunction{PART}{SplitIso}
\SetKwFunction{MAXMATCH}{MaximalIso}
The proof of Lemma~\ref{lem:outer:determinism} constructively shows how to 
obtain a maximal solution given two edges $uv \in E(G)$, $u'v' \in E(H)$ and a 
type parameter $t \in [1..4]$. We assume that this approach is realized by the 
procedure \MAXMATCH{$uv, u'v', t$}, which returns the unique maximal isomorphism 
that maps the two given edges according to the specified type. The algorithm can 
be implemented by means of a tree structure that encodes the neighboring relation 
between inner faces, e.g., SP-trees as in~\cite{Kriege2014a,Kriege2014b} or
weak dual graphs similar to the approach of \cite{Syslo1982}.
The running time to compute a maximal solution $\phi$ then is 
$\mathcal{O}(|\phi|) \subseteq \mathcal{O}(n)$.
Note that for some edge pairs not all four types of mappings are possible. 
The type $t \in [1..4]$ is \emph{valid} for a pair of edges if at least 
one incident face can be mapped according to type $t$, i.e., the edges are 
incident to faces that are bordered by the same number of vertices. 

A maximal solution $\phi$ may map vertex and edge pairs that are forbidden
according to the weight function. 
In order to obtain the maximum weight, we split $\phi$ into 
\emph{split isomorphisms} $\phi_1,\dots,\phi_k$ such that each 
(i) has non-negative weight and 
(ii) again is an isomorphism between biconnected induced common subgraphs. 
The split isomorphisms can be obtained in time $\mathcal{O}(|\phi|)$ as follows.
We consider the graph $G' = G[\dom(\phi)]$.
For every forbidden edge $uv$ that is incident to two inner faces in $G'$, we split 
the graph into $G'_i[V(C_i)\cup\{u,v\}]$, where $C_i$ is a connected component of
$G'\setminus\{u,v\}$, $i \in [1..2]$. In these graphs we delete the forbidden 
vertices and edges and determine the blocks $B_1,\dots,B_k$. 
Then $\phi$, restricted to the vertices $V(B_i)$ of a block $B_i$,
yields the split isomorphism $\phi_i$ for $i \in [1..k]$. This approach is 
realized by the function \PART{$\phi$} used in the following.
Every edge $e \in E(G)$ is mapped by at most one of the resulting isomorphisms,
referred to by $\phi_e$.
Every \TCMCS is a split isomorphism obtained from some maximal solution.

\begin{algorithm}[tb]
  \caption{\TCMCS in outerplanar graphs}
  \label{alg:2mcs:outer}
  \Input{Biconnected outerplanar graphs $G$ and $H$.}
  \Output{Weight of a maximum common biconnected subgraph isomorphism.}
  \Data{Table $D(e,f,t)$, $e \in E(G)$, $f \in E(H)$, $t \in [1..4]$ 
        storing the weight of a \TCMCS $\phi$ mapping $e$ to $f$ with $\type(e,\phi)=t$.
  }

  \ForAll{$uv \in E(G)$, $u'v' \in E(H)$ and $t \in [1..4]$} {
    \If{type $t$ valid for $uv$ and $u'v'$ \KwAnd $D(uv,u'v',t)$ undefined} {\label{alg:2mcs:outer:table}
      $\phi \gets \MAXMATCH(uv,u'v',t)$ \;
      $(\phi_1, \dots, \phi_k) \gets \PART(\phi)$ \;\label{alg:2mcs:outer:split}
      \ForAll{edges $e \in E(G)$ mapped to $f \in E(H)$ by $\phi$} {
         $D(e,f,\type(e,\phi)) \gets \left\{\begin{array}{ll} 
             W(\phi_e) & \ \text{if $e$ is mapped by the split iso. $\phi_e$} \\
             -\infty   & \ \text{otherwise.}\end{array}\right.$\label{alg:2mcs:outer:weight}\;
      }
    }
  }
  \Return maximum entry in $D$ \label{alg:2mcs:outer:return}

\end{algorithm}

Algorithm~\ref{alg:2mcs:outer} uses a table $D(e,f,t)$, $e \in E(G)$, 
$f \in E(H)$, $t \in [1..4]$ storing the weight of a \TCMCS under the
constraint that it maps $e$ to $f$ according to type $t$.
The size of the table is $4 |E(G)| |E(H)| \in \mathcal{O}(nm)$, where $n=|V(G)|$
and $m=|V(H)|$.
The algorithm starts with all pairs of edges and all valid types of mappings 
between them. For each, the maximal isomorphism between biconnected common induced
subgraphs is computed by extending this initial mapping.
By splitting the maximal solution, multiple valid isomorphisms with non-negative
weight are obtained.
These weights are then stored in $D$ for all pairs of edges contained in $\phi$ 
considering the type of the mapping.
This includes the $-\infty$ weights occurring if there are forbidden vertices or edges.
Keeping these values allows to avoid generating the same isomorphism multiple times.
The main procedure loops over all pairs of edges and the four possible mappings
for each pair. Note that a mapping $\phi$ and its split isomorphisms are computed
in time $\mathcal{O}(|\phi|) \subseteq \mathcal{O}(n)$. Improved analysis gives the 
following result.

\begin{theorem}
\label{2mcs:proof}
 Algorithm~\ref{alg:2mcs:outer} computes the weight of a \TCMCS between 
 biconnected outerplanar graphs $G$ and $H$ in time $\mathcal{O}(|G||H|)$.
\end{theorem}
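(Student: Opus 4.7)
The plan is to prove correctness and then the running time bound separately, with Lemma~\ref{lem:outer:determinism} doing the heavy lifting in both parts.

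For correctness, I would argue that every \TCMCS appears as some split isomorphism $\phi_e$ considered by the algorithm. By the discussion preceding Algorithm~\ref{alg:2mcs:outer}, any \TCMCS $\psi$ is a split isomorphism of some maximal common biconnected induced subgraph isomorphism $\phi$. Pick any edge $e \in E(G)$ mapped by $\psi$ (hence by $\phi$), let $f = \phi(e)$, and set $t = \type(e,\phi)$; then $t$ is valid for $(e,f)$, and by Lemma~\ref{lem:outer:determinism} the maximal isomorphism $\phi$ is uniquely determined by the triple $(e,f,t)$. Thus when the main loop reaches $(e,f,t)$ it either computes $\phi$ and its split isomorphisms directly, or else $D(e,f,t)$ has already been set, necessarily by the same unique $\phi$ (again by Lemma~\ref{lem:outer:determinism}). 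Either way, $W(\psi) = W(\phi_e)$ is stored in $D(e,f,t)$. Conversely, every finite entry written into $D$ equals $W(\phi_e)$ for some split isomorphism, i.e., a valid common biconnected induced subgraph isomorphism with non-negative weight. So the maximum over $D$ returned in line~\ref{alg:2mcs:outer:return} is exactly the weight of a \TCMCS.

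For the running time, the table $D$ has size $4|E(G)||E(H)| \in \mathcal{O}(nm)$, and the outer loop performs that many iterations; each iteration pays $\mathcal{O}(1)$ to test whether $D(uv,u'v',t)$ is defined. The subtle point is bounding the cumulative cost of the bodies in which $D$ is undefined. When such a body executes, \MAXMATCH{} and \PART{} together take $\mathcal{O}(|\phi|)$ time, and then the inner loop writes exactly $|\phi|$ entries of $D$, one per edge mapped by $\phi$. By Lemma~\ref{lem:outer:determinism} each of these entries was previously undefined (otherwise the outer guard would already have skipped the triple that triggered this body), so the body performs $\mathcal{O}(1)$ amortized work per newly defined entry of $D$. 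Summing over all bodies, the total work for \MAXMATCH{}/\PART{}/writes is $\mathcal{O}(|D|) = \mathcal{O}(nm)$. Adding the $\mathcal{O}(nm)$ loop overhead gives $\mathcal{O}(|G||H|)$ overall.

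The main obstacle is precisely this amortization: a worst-case bound that multiplies $\mathcal{O}(nm)$ iterations by $\mathcal{O}(n)$ per call to \MAXMATCH{} would give only $\mathcal{O}(n^2m)$. One must use Lemma~\ref{lem:outer:determinism} twice — once to guarantee that two distinct executions of the body produce isomorphisms with disjoint sets of newly defined $D$-entries, and once to ensure the guard on line~\ref{alg:2mcs:outer:table} genuinely prevents recomputation — to charge $\mathcal{O}(|\phi|)$ against the $|\phi|$ entries filled, rather than against the single triple that triggered the call.
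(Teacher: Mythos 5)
Your proposal is correct and its running-time analysis is essentially the paper's own argument: charge the $\mathcal{O}(|\phi|)$ cost of \textsc{MaximalIso} and \textsc{SplitIso} to the $|\phi|$ cells of $D$ that get filled, and use Lemma~\ref{lem:outer:determinism} to guarantee (via the guard in line~\ref{alg:2mcs:outer:table}) that no cell is filled twice, so the total work is bounded by the table size $\mathcal{O}(|G||H|)$. The explicit correctness argument you add is sound but is treated in the paper as already established by the discussion preceding the algorithm (in particular the observation that every \TCMCS is a split isomorphism of some maximal solution).
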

\begin{proof}
 We allocate the costs for a call of \MAXMATCH followed by \PART to cells of the
 table $D$. A mapping $\phi$ containing $k$ edges is computed in time 
 $\mathcal{O}(k)$ and as a result exactly $k$ cells of the table $D$ are filled 
 with a value. 
 The value of a cell is computed at most once: Line~\ref{alg:2mcs:outer:table}
 assures that an edge mapping of a specific type is not used as initial mapping
 when the corresponding cell is already filled.
 Every initial mapping that is extended must lead to an isomorphism containing 
 only edge mappings associated with undefined cells according to 
 Lemma~\ref{lem:outer:determinism}.
 Therefore the total costs of the algorithm can be allocated to cells of $D$, 
 such that each cell pays a constant amount.
 This proves that the total running time is bounded by the size of the table, 
 which is $\mathcal{O}(|G||H|)$. \qed
\end{proof}

We can easily modify the algorithm to enumerate all maximum isomorphisms without 
affecting the total running time. 
First we run Algorithm~\ref{alg:2mcs:outer} once to obtain the maximum weight 
$W_\text{max}$. Then we run a modified version of Algorithm~\ref{alg:2mcs:outer} 
that outputs every split isomorphism $\phi_i$ of size $W(\phi_i)=W_\text{max}$ 
as soon as it is found, right after  $\PART(\phi)$ is called in line \ref{alg:2mcs:outer:split}.

\section{Solving \BBPMCS in Outerplanar Graphs}
\label{sec:CompSingleMCS}
In the previous section we have presented an algorithm to compute a \TCMCS between two biconnected outerplanar graphs.
In this section we will generalize it to compute a \BBPMCS between two outerplanar graphs $G$ and $H$.
In the following we assume the isomorphisms to be BBP.
We require the input graphs to be connected.
Otherwise we compute a \BBPMCS for all pairs of connected components and select an isomorphism of maximum weight. 

We proceed as follows.
First, we give insight into the \emph{BC-tree} data structure, which helps to partition the set $\S$ of all BBP common subgraph isomorphisms between $G$ and $H$ into subsets w.r.t.\,certain conditions.
Then we compute an isomorphism of maximum weight in each of the subsets using a dynamic programming approach similar to the one used in~\cite{Droschinsky2016} to solve the maximum common subtree problem.
Among the computed isomorphisms we output one with maximum weight, thus a \BBPMCS.

\subsubsection{The BC-tree data structure.}
Given a \BBPMCS, we can observe that bridges of $G$ are mapped to bridges of $H$ and that edges in one block of $G$ can only be mapped to edges contained in exactly one block of $H$, such that the mapped edges form a biconnected common subgraph. 
For a connected graph $G$ let $\C^G$ denote the set of cutvertices, $\Bl^G$ the 
set of blocks and $\Br^G$ the set of bridges and $\B^G:=\Bl^G\cup \Br^G$. 
The \emph{BC-tree} ${\BC^G}$ of $G$ is the tree with nodes $\B^G \cup \C^G$ and 
edges between nodes $b \in \B^G$ and $c\in \C^G$ iff $c\in V(b)$.
We refer to the vertices of the BC-tree as B- and C-nodes and distinguish block 
nodes from bridge nodes.
An example of a graph $G$ and its BC-tree $\BC^G$ is shown in Fig.~\ref{fig:bc}.
For any graph $G$, we define  \MCC$(V',U)$ as the connected component of $G[V']$ that includes at least one vertex of $U$.
We allow only such sets $U$, where the component is unambiguous.
For example, in Fig.~\ref{fig:bc}, \MCC$(V_G\setminus V_{b_2},V_{b_4})$ is the graph $G[\{c_3,u,v\}]$.

\tikzstyle{Vertex}=[circle, draw = black, fill=black, inner sep=0pt, minimum width=4.2pt]
\tikzstyle{Edge} = [draw,thick,-]
\tikzstyle{BlNode}=[rounded corners=2, thick, draw=black, fill=lightgray, minimum size = 1.0em]
\tikzstyle{BrNode}=[rounded corners=2, thick, draw=black, minimum size = 1.0em]

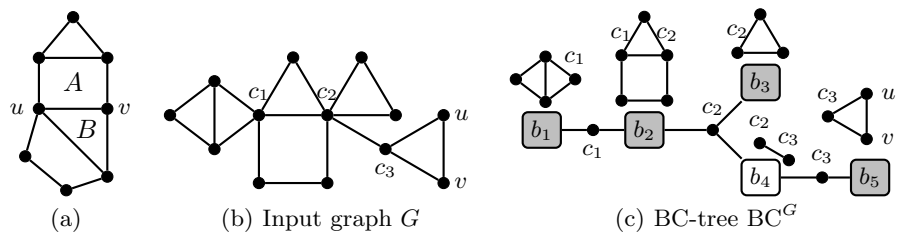
\begin{figure}[t]
	\centering
	\subfigure[]{
		\begin{tikzpicture}[scale = 0.90]
			\node (d) at (2,1) [Vertex,label=left:\small{$u$}] {};
			\node (e) at (2,1.75) [Vertex] {}
				edge [Edge] (d);
			\node (c) at (3,1.75) [Vertex] {}
				edge [Edge] (e);
			\node (f) at (1.8,0.3) [Vertex] {}
				edge [Edge] (d);
			\node (g) at (3,0) [Vertex] {}
				edge [Edge] (d);
			\node (i) at (2.4,-0.2) [Vertex] {}
				edge [Edge] (f)
				edge [Edge] (g);
			\node (h) at (3,1) [Vertex,label=right:\small{$\!v$}] {}
				edge [Edge] (d)
				edge [Edge] (c)
				edge [Edge] (g);
			\node (b) at (2.5,2.35) [Vertex] {}
				edge [Edge] (e)
				edge [Edge] (c);
			\node (A) at (2.5,1.4) {\normalsize{$A$}};
			\node (B) at (2.7,0.7) {\normalsize{$B$}};
		\end{tikzpicture}
		\label{fig:faces}
	}
	\subfigure[Input graph $G$]{
		\begin{tikzpicture}[scale = 0.90]
			\node (a) at (0.7,1) [Vertex] {};
			\node (b) at (1.35,1.5) [Vertex] {}
				edge [Edge] (a);
			\node (c) at (1.35,0.5) [Vertex] {}
				edge [Edge] (a)
				edge [Edge] (b);
			\node (d) at (2,1) [Vertex,label=above:\small{$c_1$}] {}
				edge [Edge] (b)
				edge [Edge] (c);
			\node (e) at (2.5,1.85) [Vertex] {}
				edge [Edge] (d);
			\node (f) at (2,0) [Vertex] {}
				edge [Edge] (d);
			\node (g) at (3,0) [Vertex] {}
				edge [Edge] (f);
			\node (h) at (3,1) [Vertex,label=above:\small{$c_2$}] {}
				edge [Edge] (d)
				edge [Edge] (e)
				edge [Edge] (g);
			\node (i) at (3.5,1.85) [Vertex] {}
				edge [Edge] (h);
			\node (j) at (4,1) [Vertex] {}
				edge [Edge] (i)
				edge [Edge] (h);
			\node (k) at (3.85,0.5) [Vertex,label=below:\small{$c_3$}] {}
				edge [Edge] (h);
			\node (l) at (4.7,1) [Vertex,label=right:\!\small{$u$}] {}
				edge [Edge] (k);
			\node (m) at (4.7,0) [Vertex,label=right:\!\small{$v$}] {}
				edge [Edge] (l)
				edge [Edge] (k);
		\end{tikzpicture}
		\label{fig:bc:graph}
	}
	\subfigure[BC-tree $\BC^G$]{	
		\begin{tikzpicture}[scale = 0.90]
			\node (b1) at (0.5,0) [BlNode] {$b_1$};
			\node (c1) at (1.25,0) [Vertex,label=below:\small{$c_1$}] {}
				edge [Edge] (b1);
			\node (b2) at (2,0) [BlNode] {$b_2$}
				edge [Edge] (c1);
			\node (c2) at (3,0) [Vertex,label=above:\small{$c_2$}] {}
				edge [Edge] (b2);
			\node (b3) at (3.7,0.7) [BlNode] {$b_3$}
				edge [Edge] (c2);
			\node (b4) at (3.7,-0.7) [BrNode] {$b_4$}
				edge [Edge] (c2);
			\node (c3) at (4.6,-0.7) [Vertex,label=above:\small{$c_3$}] {}
				edge [Edge] (b4);
			\node (b5) at (5.3,-0.7) [BlNode] {$b_5$}
				edge [Edge] (c3);
			\node at (0.65,0.8) {
				\begin{tikzpicture}[scale = 0.6]
					\node (a) at (0.7,1) [Vertex] {};
					\node (b) at (1.35,1.5) [Vertex] {}
						edge [Edge] (a);
					\node (c) at (1.35,0.5) [Vertex] {}
						edge [Edge] (a)
						edge [Edge] (b);
					\node (d) at (2,1) [Vertex,label=above:\small{$c_1$}] {}
						edge [Edge] (b)
						edge [Edge] (c);
				\end{tikzpicture} };
			\node at (2,1.05) {
				\begin{tikzpicture}[scale = 0.6]
					\node (d) at (2,1) [Vertex,label=above:\small{$c_1$}] {};
					\node (e) at (2.5,1.85) [Vertex] {}
						edge [Edge] (d);
					\node (f) at (2,0) [Vertex] {}
						edge [Edge] (d);
					\node (g) at (3,0) [Vertex] {}
						edge [Edge] (f);
					\node (h) at (3,1) [Vertex,label=above:\small{$c_2$}] {}
						edge [Edge] (d)
						edge [Edge] (e)
						edge [Edge] (g);
				\end{tikzpicture} };
			\node at (3.6,1.42) {
				\begin{tikzpicture}[scale = 0.6]
					\node (h) at (3,1) [Vertex,label=above:\small{$c_2$}] {};
					\node (i) at (3.5,1.85) [Vertex] {}
						edge [Edge] (h);
					\node (j) at (4,1) [Vertex] {}
						edge [Edge] (i)
						edge [Edge] (h);
				\end{tikzpicture} };
			\node at (3.9,-0.1) {
				\begin{tikzpicture}[scale = 0.45]
					\node (h) at (3,1) [Vertex,label=above:\small{$c_2$}] {};
					\node (k) at (3.85,0.5) [Vertex,label=above:\small{$c_3$}] {}
						edge [Edge] (h);
				\end{tikzpicture} };
			\node at (5.1,0.2) {
				\begin{tikzpicture}[scale = 0.6]
					\node (k) at (3.85,0.5) [Vertex,label=above:\small{$c_3$}] {};
					\node (l) at (4.7,1) [Vertex,label=right:\small{$u$}] {}
						edge [Edge] (k);
					\node (m) at (4.7,0) [Vertex,label=right:\small{$v$}] {}
						edge [Edge] (l)
						edge [Edge] (k);
				\end{tikzpicture} };
		\end{tikzpicture}
		\label{fig:bc:graph_bc-tree}
	}
	\caption{A biconnected outerplanar graph~\subref{fig:faces} with an edge $uv$ incident to the faces $A$ and $B$; a connected outerplanar graph \subref{fig:bc:graph} and its BC-tree \subref{fig:bc:graph_bc-tree}. Block nodes have a gray background, while bridge nodes are not filled. The solid black nodes are the cutvertices. The corresponding subgraphs of $G$ are shown above the block and bridge nodes.}
	\label{fig:bc}
\end{figure}

\subsubsection{Partitioning of all BBP isomorphisms $\S$ into $\S=\bigcup_x \S_x$.}
First, we define $\S_1$ and $\S_2$.
Let $b\in \B^G$ be an arbitrary block or bridge in $G$.
We define $\S_1$ to contain all isomorphisms $\phi$ where at least one edge in $b$ is mapped by the isomorphisms, i.e., $|\dom(\phi)\cap V(b)|\geq 2$.
$\S_2$ is defined to contain all isomorphisms where exactly one vertex in $b$ is mapped by the isomorphism.
We can observe that $S_1$ and $S_2$ are disjoint and all other isomorphisms between $G$ and $H$ do not contain any vertices of $b$.
Let $N=\{b_1,\ldots, b_k\} \subseteq \B^G$ be the blocks and bridges that share a 
cutvertex with $b$, i.e., $b_i \in N$ iff there is a node $c\in \C^G$ with $bc$ and $cb_i$ 
edges in the BC-tree $\BC^G$.
Any isomorphism $\phi$ that maps no vertex of $b$, maps vertices of at most one node $b_i$, because $G[\dom(\phi)]$ is connected by definition.
For every $b_i$ we recursively define sets $\S_x$ of isomorphisms as described above that map only vertices of \MCC$(V_G\setminus V_b,V_{b_i})$.

As example consider Fig.~\ref{fig:bc:graph_bc-tree} and let $b:=b_2$.
$\S_1$ consist of isomorphisms which map at least one edge of $b_2$ to an edge in $H$. The isomorphisms in $\S_2$ map exactly one vertex of $V(b)$ to $H$.
The recursion continues on $N=\{b_1,b_3,b_4\}$.
Three additional sets consist of isomorphisms which map at least one edge (and three more for exactly one vertex) of $V(b_i),\ i\in\{1,3,4\}$, but no vertex of $V(b_2)$, operating on \MCC$(V_G\setminus V_{b_2},V_{b_i})$.
The recursion for $b:=b_4$ continues with $N=\{b_5\}$ and two additional sets.
Some of the sets $\S_x$ are empty.

\subsubsection{Partitioning of $\S_x$ into $\S_x=\bigcup_y \Part{xy}$.}
Before computing an isomorphism of maximum weight in a set $\S_x$, we partition $\S_x$ into subsets $\Part{x1},\Part{x2},\ldots$.
The focus for the separation now is on the graph $H$.
We distinguish two cases.
If $\S_x$ is a set, where at least one edge of a certain block (bridge) $b$ is mapped, then $\S_x$ is partitioned into $|\Bl^H|\ (|\Br^H|)$ subsets.
The meaning is that for each B-node $\bar b\in Bl^H\ (\bar b \in \Br^H)$ the mapped vertices of the B-node $b\in B^G$ are mapped only to $V(\bar b)$.
In terms of BBP this is block (bridge) preserving between $b$ and $\bar b$, as intended.
If $\S_x$ is a set, where exactly one vertex of $b$ is mapped, the subsets are defined as follows.
For each $(v,\bar v)\in V(b)\times V(H)$, where $\wIso(v\bar v)\neq -\infty$ and $v$ is in the \MCC{} we operate on, we define a subset with the restriction $\phi(v)=\bar v$.

\subsubsection{Computing a maximum isomorphism in a subset $\Part{xy}$.}
We now describe how to compute an isomorphism $\phi$ of maximum weight in a subset $\Part{xy}\subseteq\S_x$.
The idea is to recursively extend mappings between some vertices of two single bridges or two single blocks along all pairs of mapped cutvertices into other B-nodes determined by \MWM{}s, while preserving bridges and blocks.
Between the computed isomorphisms we select one of maximum weight.

First, let $\Part{xy}$ be a subset, where at least one edge of a B-node $b\in \B^G$ has to be mapped to an edge of a B-node $\bar b\in \B^H$.
If $b$ and $\bar b$ are bridges, the two possible mappings $V(b)\to V(\bar b)$ are considered.
If both are blocks, all maximal common biconnected subgraph isomorphisms between the blocks are considered (cf.\,Alg.~\ref{alg:2mcs:outer}).
We may have given a fixed mapping $v\mapsto \bar v$ (cf.\,(i) below).
We call a considered isomorphism valid, if it respects the possible fixed mapping and contains only vertices of the \MCC{} we are operating on.
We extend all the valid isomorphisms $\phi$ along all pairs $\phi(c)=\bar c, c\neq v$ of mapped cutvertices as follows.
Let $B_c:=\{b_1,\ldots b_k\}$, be the B-nodes of $\B^G$, where $bcb_i$ is a path, and $\bar B_c:=\{\bar b_1,\ldots \bar b_l\}$, be the B-nodes of $\B^H$, where $\bar b \bar c \bar b_j$ is a path, $i\in[1..k],j\in[1..l]$. 
For each pair $(b_i,\bar b_j)\in B_c\times \bar B_c$ we recursively calculate a \BBPMCS $\varphi_{ij}$ under the following restrictions: 
(i) The cutvertices must be mapped: $c\mapsto \bar c$. 
(ii) $b_i$ and $\bar b_j$ are both bridges or both blocks. 
(iii) At least one other vertex in the block (bridge) $b_i$ must be mapped, but only to $V(\bar b_j)$. 
Restriction (iii) assures that at least one vertex is added to the isomorphism. 
Therefore, the recursion to compute $\varphi_{ij}$ is the method described in this paragraph.
After computing $\varphi_{ij}$ for each pair $(b_i,\bar b_j)$, we construct a weighted bipartite graph with vertices $B_c\uplus\bar B_c$ for each pair of mapped cutvertices. 
The weight of each edge $b_i\bar b_j$ is determined by the weight of a \BBPMCS under the above restrictions, subtracted by $\wIso(c,\bar c)$ for the appropriate cutvertices $c$ and $\bar c$.
If there in no such restricted \BBPMCS, there is no edge.
Computing a \MWM on each of the bipartite graphs determines the extension of $\phi$. 
For each matching edge the corresponding computed isomorphisms are merged with $\phi$. 
After extending all valid isomorphisms, we select one of maximum weight.%

Second, let $\Part{xy}$ be a subset, where exactly one vertex $v$ of $V(b)$ is mapped, and let $\phi(v)=\bar v$. 
If $v$ is no cutvertex, the only possible expansion is within $V(b)$, which is not allowed in this subset.
Therefore this subset contains exactly one isomorphism, $v\mapsto\bar v$. 
Next, assume $v$ is a cutvertex. If $\bar v$ is a cutvertex, we may extend $\phi$ similar to the previous paragraph. 
In doing so, $c:=v$, $\bar c:=\bar v$ and $B_c$ as before.
The only difference is $\bar B_c$, which is defined by all B-nodes containing $\bar v=\bar c$.
The reason is that we have not mapped any other vertices yet, therefore we may expand in all directions in $H$.
If $\bar v$ is no cutvertex, then $\bar v$ is contained in exactly one $\bar b \in B^H$. 
We are interested in BBP isomorphisms only. This means, all vertices that are mapped to $V(\bar b)$ must be in the same block or bridge $b'\in \B^G$. 
Therefore, for each $b'\in \B^G$, where $bvb'$ is a path and $b'$ and $\bar b$ are of the same type (bridge/block), we compute an isomorphism with fixed mapping $v\mapsto \bar v$, where at least one edge of $b'$ is mapped to $\bar b$.
This falls back to the method of the above paragraph as well. 
Among the computed isomorphisms we select one of maximum weight. 
The appendix lists the pseudocode for computing a \BBPMCS as described above.

\subsubsection{Time Complexity.}
The time to compute a \BBPMCS essentially depends on the time to compute the BC-trees, the biconnected  isomorphisms between the  blocks of $G$ and $H$, and the time to compute all the \MWM{}s.
The time to compute a BC-tree is linear in the number of edges and vertices. \todo{Quelle?}
Considering all pairs of blocks and Theorem~\ref{2mcs:proof} we can bound the time for computing all the biconnected isomorphisms by $\mathcal{O}(\sum_b\sum_{\bar{b}} |V_b||V_{\bar b}|)\subseteq \mathcal{O}(|G||H|)$.
We only need to compute \MWM{}s for the pairs of cutvertices of the two graphs.
It follows from the result of \cite[Theorem 7]{Droschinsky2016} for the maximum 
common subtree problem, that the total time for this is 
$\mathcal{O}(|G||H|(\min \{\Delta^G,\Delta^H\}+\log\max \{\Delta^G,\Delta^H\}))$,
where $\Delta^{\cal G}$ is the maximum degree of a C-node in $\BC^\mathcal{G}$.
This proves the following theorem.

\begin{theorem}
\label{th:runtime:bbpmcs}
\BBPMCS between two outerplanar graphs $G$ and $H$ can be solved in time $\mathcal{O}(|G||H| \Delta(G,H))$, where $\Delta(G,H)=1$ iff $G$ or $H$ is biconnected or both are of bounded degree; otherwise $\Delta(G,H)=\min \{\Delta^G,\Delta^H\}+\log\max \{\Delta^G,\Delta^H\}$.
\end{theorem}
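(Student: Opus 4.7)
The plan is to bound separately the three contributions to the total running time mentioned in the paragraph preceding the theorem: (i) computing the two BC-trees $\BC^G$ and $\BC^H$, (ii) computing all biconnected common subgraph isomorphisms between block pairs, and (iii) performing the MWM computations on the bipartite graphs associated with each pair of mapped cutvertices. Summing these three contributions, together with the $-\wIso(c,\bar c)$ bookkeeping per bipartite-graph edge, will give the claimed bound; the case distinction in $\Delta(G,H)$ then follows by observing when the MWM phase is either absent or trivial.

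For (i), I would just cite the standard linear-time block-cutvertex decomposition, giving $\mathcal{O}(|G|+|H|)$. For (ii), the key observation is that each edge of $G$ (resp.\ $H$) lies in exactly one B-node; hence $\sum_{b\in \B^G} |V(b)| \in \mathcal{O}(|G|)$ and likewise for $H$. Together with Theorem~\ref{2mcs:proof}, which bounds the cost of a single biconnected MCIS between blocks $b$ and $\bar b$ by $\mathcal{O}(|V(b)|\,|V(\bar b)|)$, this yields
\begin{equation*}
  \sum_{b\in\B^G}\sum_{\bar b\in \B^H} |V(b)|\,|V(\bar b)| \;\in\; \mathcal{O}(|G|\,|H|).
\end{equation*}
Each resulting maximal isomorphism, when split by \PART, contributes at most a constant amount of bookkeeping per mapped vertex/edge to filling the entries of the table used in the recursion for $\Part{xy}$, so the total charged to this phase stays within the same bound.

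For (iii), I would invoke the analysis of \cite[Theorem 7]{Droschinsky2016}: even though a single \MWM on a bipartite graph $B_c \uplus \bar B_c$ can be costly, the total cost across all pairs of cutvertices $(c,\bar c) \in \C^G \times \C^H$ is bounded by $\mathcal{O}(|G|\,|H|(\min\{\Delta^G,\Delta^H\} + \log\max\{\Delta^G,\Delta^H\}))$, which is exactly the degree factor appearing in $\Delta(G,H)$ in the non-trivial case. The subtraction of $\wIso(c,\bar c)$ per edge is a constant-time operation and does not worsen the bound.

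Finally, to obtain the sharpened factor $\Delta(G,H)=1$ in the special cases, I would argue as follows. If $G$ or $H$ is biconnected, then it has no cutvertex, so the recursive expansion along cutvertex pairs never happens and the MWM phase is empty; Algorithm~\ref{alg:2mcs:outer} alone solves the instance in $\mathcal{O}(|G|\,|H|)$. If both graphs have bounded degree, then $\min\{\Delta^G,\Delta^H\}+\log\max\{\Delta^G,\Delta^H\} \in \mathcal{O}(1)$, and the bound collapses to $\mathcal{O}(|G|\,|H|)$ as well. The main obstacle I expect is not in the three bullet points individually but in verifying the amortised charging argument for (iii), i.e.\ that the recursive structure of the \BBPMCS procedure does not revisit the same cutvertex pair enough times to inflate the cost beyond what \cite[Theorem 7]{Droschinsky2016} provides; this requires checking that each (ordered) cutvertex pair is the root of at most one bipartite-matching computation in the recursion, which follows from the partitioning $\S=\bigcup_x \S_x$ and $\S_x=\bigcup_y \Part{xy}$ being indeed disjoint along the BC-tree edges incident to $c$ and $\bar c$.
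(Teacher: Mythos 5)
Your proposal is correct and follows essentially the same route as the paper: linear-time BC-tree construction, the bound $\sum_{b}\sum_{\bar b}|V_b||V_{\bar b}|\subseteq\mathcal{O}(|G||H|)$ via Theorem~\ref{2mcs:proof} for the block-pair isomorphisms, and \cite[Theorem 7]{Droschinsky2016} for the aggregate MWM cost over cutvertex pairs. Your explicit justification of the $\Delta(G,H)=1$ cases (no cutvertices when a graph is biconnected; constant degree factor otherwise) is left implicit in the paper but is a correct and welcome addition.
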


\section{Experimental Evaluation}
In this section we evaluate our BBP-MCIS algorithm experimentally and compare 
to the BBP-MCES approach of~\cite{Schietgat2013}.\footnote{We are grateful to 
Leander Schietgat for providing the implementation used in~\cite{Schietgat2013}.}
Both algorithms were implemented in C\texttt{++} and compiled with GCC v.4.8.4 
as 64-bit application.
Running times were measured on an Intel Core i7-3770 CPU using a single core.
The available memory of 16 GB was sufficient for all the computations.

We are interested in answering the following questions:
\begin{description}
\item[(H1)] To what extent differs \BBPMCS from BBP-MCES on molecular graphs?
\item[(H2)] How large is the difference in terms of running time on molecular graphs?
\item[(H3)] How is the running time affected by specific properties of the input graphs?
\end{description}

To answer {\bf(H1)} and {\bf(H2)} we extracted 29000 randomly chosen pairs of outerplanar molecular graphs from a large chemical database.\footnote{NCI Open Database, GI50, \url{http://cactus.nci.nih.gov}}
The molecules in the database contain up to 104 vertices and 22 vertices on an average.
The weight function $\wIso$ was set to 1 for each pair of vertices and edges with the same label and $-\infty$ otherwise.
This matches the setting in~\cite{Schietgat2013}.

To answer {\bf(H3)} we compared the algorithms on randomly generated connected outerplanar graphs.
Our graph generator takes several parameters as input.
With them we evaluated three different properties: the graph size, the average ratio $|E|/|V|$ of edges to vertices, and the average block size.
For any outerplanar graphs the ratio of edges to vertices is less than 2.
While evaluating the effect of one property, we preserved the other two.
This procedure allows to verify whether our theoretical findings are consistent with the running times observed in practice.
We set the weight function $\wIso$ to 1 for each pair of vertices and edges, which corresponds to uniformly labeled graphs.

\begin{figure}[t]
	\centering
 	\includegraphics[width=.99\textwidth]{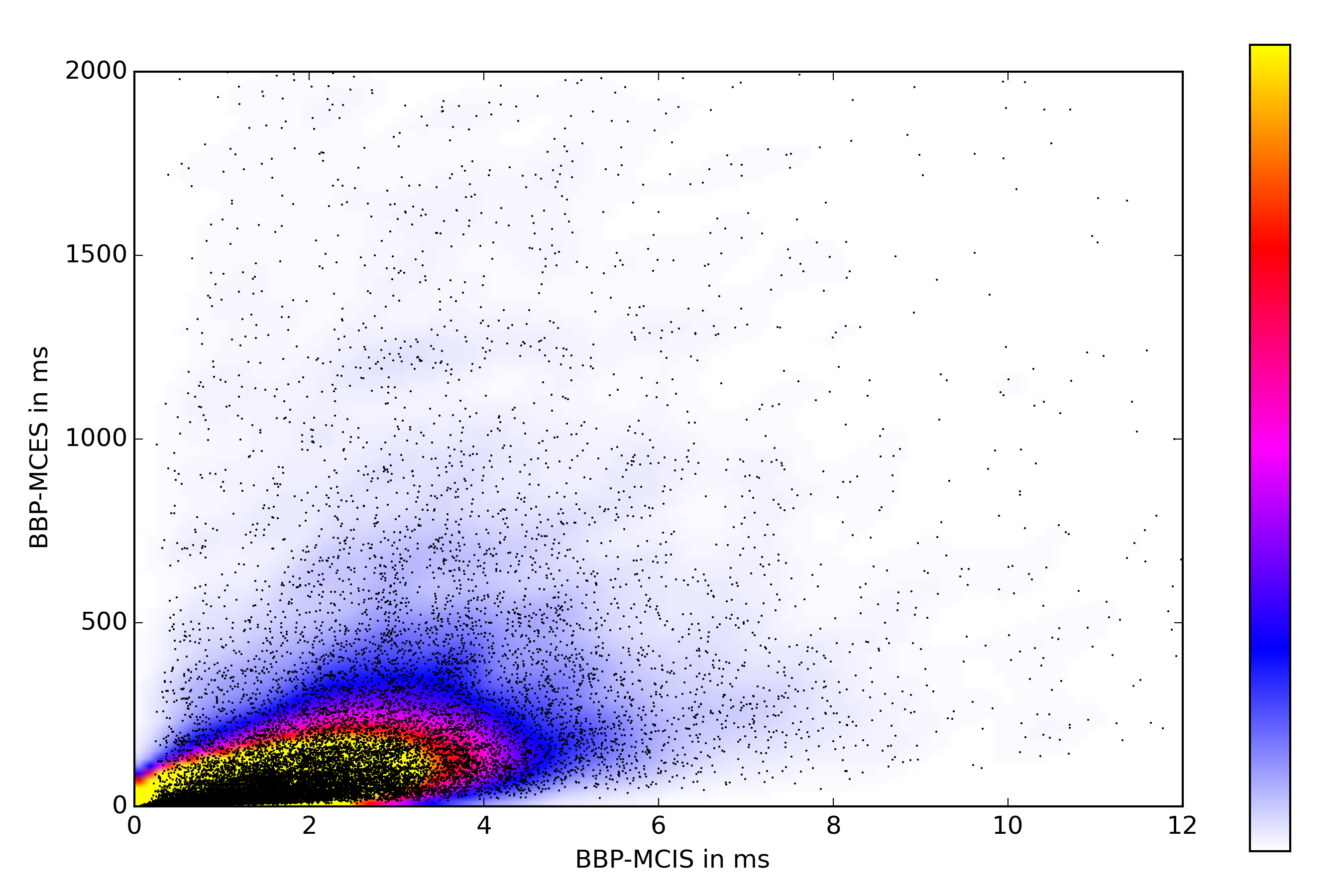}
	\caption{Running times in ms for 28\,399 \BBPMCS computations.
	Each black dot represents a \BBPMCS computation on two randomly chosen outerplanar molecular graphs.
	It directly compares the running time of our algorithm (MCIS, x-axis) and the implementation from~\cite{Schietgat2013} (MCES, y-axis).
	The running times of another 601 \BBPMCS computations did not fit into the borders.
 }
 \label{fig:molecules}
\end{figure}

\begin{table}[t]
\centering
\caption{Upper half: Running times for our implementation (MCIS) and the implementation from~\cite{Schietgat2013} (MCES). Lower half: Relative differences in computation times.}
\begin{tabular}{!{\vrule width.9pt}c!{\vrule width.9pt}r|r|r|r!{\vrule width.9pt}}
\Xhline{.9pt}
Algorithm&Average time&Median time&\ \ 95\% less than&Maximum time\\
\Xhline{.9pt}
MCIS&$1.97$ ms&$1.51$ ms&$5.28$ ms&$40.35$ ms\\
\hline
MCES&$207.08$ ms&$41.43$ ms&$871.48$ ms&$26\,353.68$ ms\\
\Xhline{.9pt}\noalign{\smallskip}\Xhline{.9pt}
Comparison&Average factor&Median factor&Minimum factor&Maximum factor\\
\Xhline{.9pt}
MCES / MCIS&83.8&25.6&1.8&28912.5\\
\Xhline{.9pt}
\end{tabular}
\label{tab:results:comp}
\end{table}

\noindent\textbf{(H1)}
While comparing the weight of the isomorphisms computed by the two algorithms we observed a difference for only 0.40\% of the 29\,000 tested molecule pairs.
This suggests that \BBPMCS yields a valid notion of similarity for outerplanar molecular graphs as it was shown for BBP-MCES~\cite{Schietgat2013}.

\noindent\textbf{(H2)}
Our algorithm computed the solutions on average 84 times faster.
The dots in Fig.~\ref{fig:molecules} represent the computation times of the two algorithms.
The results are summarized in Table~\ref{tab:results:comp}.
Schietgat et al.\,\cite{Schietgat2013} compared their BBP-MCES algorithm to a state-of-the-art algorithm for general MCIS.
Their algorithm had similar computation times on small graphs and was much faster on large graphs.
The maximum time of the general MCIS algorithm was more than 24 hours.
In contrast, our computation time never exceeded 41 ms. This clearly indicates 
that our algorithm is orders of magnitude faster than the general approach.

\begin{table}[t]
\centering
\caption{Average time $\pm$ SD over 100 \BBPMCS computations on random outerplanar graphs, varying one property (graph size, ratio of edges to vertices, block size BS). Note the units of measurement; timeout---total time exceeds 3 days.}
\begin{tabular}{!{\vrule width.9pt}c!{\vrule width.9pt}r|r|r|r|r!{\vrule width.9pt}}
\Xhline{.9pt}
Size&10&20&40&80&160\\
\Xhline{.9pt}
MCIS&$0.7\pm0.3$ ms& $2.3\pm0.8$ ms&$8.2\pm1.6$ ms &$33.5\pm3.6$ ms & $133.2\pm10.1$ ms  \\
\hline
MCES& $207\pm118$ ms&$3.4\pm6.0$ s &$38.6\pm90.6$ s &$234.2\pm420.9$ s & timeout \\
\Xhline{.9pt}\noalign{\smallskip}\Xhline{.9pt}
$|E|/|V|$&0.98&1.10&1.24&1.46&1.78\\
\Xhline{.9pt}
MCIS&  $3.8\pm0.3$ ms &  $4.0\pm1.1$ ms &$8.2\pm1.6$ ms  &$30.8\pm4.0$ ms  &$110.3\pm11.6$ ms \\
\hline
MCES& $223\pm16$ ms& $2.2\pm2.6$ s&$38.6\pm90.6$ s  &$111.0\pm213.8$ s & $216.1\pm288.3$ s\\
\Xhline{.9pt}\noalign{\smallskip}\Xhline{.9pt}
BS&3&5&10&20&40\\
\Xhline{.9pt}
MCIS&  $27\pm6.4$ ms&$13.3\pm2.4$ ms & $8.4\pm1.7$ ms &$5.5\pm1.4$ ms  & $4.5\pm0.9$ ms \\
\hline
MCES& $132\pm14$ ms&$689\pm548$ ms & $83.7\pm118.7$ s & $30.4 \pm 27.8$ min& timeout\\
\Xhline{.9pt}
\end{tabular}
\label{tab:results:random}
\end{table}

\noindent\textbf{(H3)}
We first varied the size of the input graphs, while preserving an average ratio of edges to vertices of $1.24$ and an average block size of $8$. Based on Theorem~\ref{th:runtime:bbpmcs} we expected the average time to increase by a factor of a bit more than 4, if we double the size. The results in Table~\ref{tab:results:random} closely match this expectation.

Next, we evaluated different ratios of edges to vertices.
The graph size was set to 40 and the average block size to 8.
A higher ratio results in a higher number of faces in the blocks and consequently 
affects the time required by Alg.~\ref{alg:2mcs:outer}.
In particular, the table size and, thus, the running time is expected to show a quadratic growth. 
The increase in running time exceeds our expectation. This might be explained by
the increasing size of the data structure used to represent the faces of the blocks.
  
Finally, we evaluated different average block sizes.
The graph size was set to 40 and the average ratio of edges to vertices to 1.24.
Higher block sizes mean less MWMs to compute, which are the most costly part in 
the \BBPMCS computation. Therefore we expected the running time to decrease.
The results shown in Table~\ref{tab:results:random} support this.

\section{Conclusion}
We have developed an algorithm, which computes a well-defined, chemical 
meaningful largest common substructure of outerplanar molecular graphs in a 
fraction of a second.
Hence, our method makes the graph-based comparison in large molecular datasets 
possible.
As future work, we would like to extend our approach to more general graph 
classes with a focus on efficiency in practice.

\bibliographystyle{splncs03}
\bibliography{CommonSubtreeEnumeration}

\newpage
\appendix
\section{Pseudocode}
\SetKwFunction{SS}{SetSx}
\SetKwFunction{MCS}{BBP-MCIS}
\SetKwFunction{BC}{BC}
\SetKwFunction{B}{B}
\SetKwFunction{C}{C}
\SetKwFunction{MCC}{CC}
\SetKwFunction{CASEONE}{BBP-Edge}
\SetKwFunction{CASETWO}{BBP-SingleVertex}
\SetKwFunction{CASEBOTH}{BBP-Edge/SingleVertex}
\SetKwFunction{ENUM}{Enum-E}
\SetKwFunction{ENUMT}{Enum-SV}
\SetKwFunction{ENUMB}{Enum-E/SV}

\algorithmstyle{ruled}
\begin{algorithm}
  \caption{\BBPMCS of outerplanar graphs}
  \label{alg:bbpmcs}
  \Input{Connected outerplanar graphs $G$ and $H$,\\
  \ weight function $\wIso:(V_G\times V_H)\cup (E_G\times E_H)\to \mathbb{R}^{\geq 0}\cup \{-\infty\}$.}
  \Output{A \BBPMCS of $G$ and $H$.}
  \Data{BC-trees $\BC^G$ and $\BC^H$ with node sets $B^G, C^G, B^H, C^H$.}
  Select an arbitrary B-node (block or bridge) $b\in B^G$.\;
  $\phi\leftarrow \SS(b,\emptyset)$  \note*[r]{Initial recursion call.}
  [Output $\phi$] 
\end{algorithm}

\algorithmstyle{boxed}
\begin{algorithm}
 \Procedure{$\SS(b,X)$}
 {
  \label{alg:bbpmcs:ss}
  \Input{B-node $b\in B^G$, excluded vertices $X\subseteq V(G)$.}
  \Output{Isomorphism of maximum weight on $\MCC(V_G\setminus X,V_b)$.}
  $\varphi\gets (\emptyset\to\emptyset)$ \note*[r]{Initialize as empty mapping}
  \ForAll{B-nodes $\bar b\in B^H$, where $b$ and $\bar b$ are both bridges or both blocks} {
	$\phi\leftarrow \CASEONE(b,\bar b,X)$  \note*[r]{$\Part{xy}\subseteq\S_x$, at least one edge is mapped}
	\textbf{if} $\WIso(\phi)>\WIso(\varphi)$ \textbf{then} $\varphi\gets \phi$
  }
  \ForAll{pairs $(v,\bar v)\in (V(b)\setminus X) \times V(H)$} {
	\If{$\wIso(v\bar v)\neq -\infty$}
	{
	  $\phi\leftarrow \CASETWO(b,X,v,\bar v)$  \note*[r]{$\Part{xy}\subseteq\S_x$, single vertex}
	  \textbf{if} $\WIso(\phi)>\WIso(\varphi)$ \textbf{then} $\varphi\gets \phi$
	}
  }
  \ForAll{paths $bcb'$ in $\BC^G$, where $c\notin X$} {
    $\phi\gets \SS(b',V(b))$ \note*[r]{No vertex of $V(b)$ is mapped, recursion}
	\textbf{if} $\WIso(\phi)>\WIso(\varphi)$ \textbf{then} $\varphi\gets \phi$
  }
  \Return $\varphi$
 }
\end{algorithm}

\begin{algorithm}
 \Procedure{$\CASEONE(b,\bar b,X,v,\bar v)$}
 {
  \Input{B-nodes $b\in B^G, \bar b\in B^H, X\subseteq V(G)$, mapping $v\mapsto \bar v$ (optional).}
  \Output{Maximum isomorphism $\varphi$, where \emph{at least one edge} of $b$ is mapped to $\bar b$; restricted to $\MCC(V_G\setminus X,V_b)$ and $\varphi(v)=\bar v$ (if given).}
  \If {exactly one of $b,\bar b$ is a block}
  {
    \Return $\emptyset\to \emptyset$ \note*[r]{not block and bridge preserving}
  }
  \ForAll{valid isomorphisms $\varphi:V(b)\to V(\bar b)$\label{alg:bbpmcs:valid}}{
    \ForAll{pairs $(c,\bar c)\neq (v,\bar v)$ of cutvertices mapped by $\varphi$\label{alg:bbpmcs:pairs}}{
    
      \ForAll{pairs $(b_i,\bar b_j)\in \B^G\times \B^H$ where $bcb_i$ and $\bar b\bar c\bar b_j$ are paths}{
        $w(b_i,\bar b_j)\gets \WIso(\CASEONE(b_i,\bar b_j,X,c,\bar c))-\wIso(c,\bar c)$
      }
      Compute \MWM $M$ on bipartite graph with edge weights $w(b_i,\bar b_j)$\;
      \ForAll{edges $b_i\bar b_j\in M$}{
        Extend $\varphi$ by $\CASEONE(b_i,\bar b_j,X,c,\bar c)$
      }
    }
	\textbf{if} $\WIso(\phi)>\WIso(\varphi)$ \textbf{then} $\varphi\gets \phi$
  }
  \Return $\varphi$
 }
\end{algorithm}

\begin{algorithm}
 \Procedure{$\CASETWO(b,X,v,\bar v)$}
 {
  \Input{B-node $b\in B^G$, excluded vertices $X\subseteq V(G)$, mapping $v\mapsto \bar v$.}
  \Output{Maximum isomorphism $\varphi$, where \emph{a single vertex} of $V(b)$ is mapped to $V(\bar b)$; restricted to $\MCC(V_G\setminus X,V_b)$, $\varphi(v)=\bar v$.}
  $\varphi\gets (v\mapsto \bar v$)\;
  \If{$v\notin C^G$}  
  {
    \Return $\varphi$ \note*[r]{No expansion possible}
  }
  \eIf{$\bar v\in \C^H$}
  {
    \ForAll{pairs $(b_i,\bar b_j)\in \B^G\times \B^H$, where $bvb_i$ is a path and $\bar v\in \bar b_j$}{
      $w(b_i,\bar b_j)\gets \WIso(\CASEONE(b_i,\bar b_j,X,v,\bar v))-\wIso(v,\bar v)$
    }
    Compute \MWM $M$ on bipartite graph with edge weights $w(b_i,\bar b_j)$\;
    \ForAll{edges $b_i\bar b_j\in M$}{
      Extend $\varphi$ by $\CASEONE(b_i,\bar b_j,X,v,\bar v)$
    }
  }
  {
    $\bar b\gets$ the unambiguous node of the set $\B^H$ that contains the vertex $\bar v$\;
    \ForAll{$b_i\in \B^G$, where $bvb_i$ is a path}{
      $\phi\gets \CASEONE(b_i,\bar b,X,v,\bar v)$\;
      \textbf{if} $\WIso(\phi)>\WIso(\varphi)$ \textbf{then} $\varphi\gets \phi$
    }
  }
  \Return $\varphi$
 }
\end{algorithm}

\end{document}